\documentclass[12pt,letterpaper]{article}

\usepackage[utf8]{inputenc}
\usepackage{graphicx}
\usepackage{xcolor}

\usepackage{amsmath}
\usepackage{amssymb}
\usepackage{amsthm}
\usepackage{mathtools}
\usepackage{cuted}
\usepackage{lineno}
\usepackage{caption}
\usepackage{dsfont}
\usepackage{subfig}
\usepackage{hyperref}
\usepackage{enumitem}
\usepackage{algorithm}
\usepackage{algpseudocode}

\algtext*{EndFor}
\algtext*{EndIf}
\algtext*{EndFunction}

\newtheorem{theorem}{Theorem}
\newtheorem{corollary}{Corollary}
\newtheorem{lemma}{Lemma}

\usepackage[letterpaper]{geometry}
\title{Weighted Optimization with Multiple Axis-Aligned Rectangles}
\author{
José Fernández Goycoolea\thanks{Departamento de Matem\'atica y F\'isica, Universidad de Magallanes, Avenida Bulnes 01855, Punta Arenas, Chile, {\tt jose.fernandezg@umag.cl}, https://orcid.org/0000-0001-5349-4348.} 
\and
Luis H. Herrera\thanks{Departamento de Inform\'atica y Computaci\'on, Universidad Tecnol\'ogica Metropolitana, Jos\'e Pedro Alessandri 1242, \~{N}u\~{n}oa, Santiago de Chile 7800002, Región Metropolitana, Chile, 
{\tt luis.herrerab@utem.cl}, https://orcid.org/0000-0001-7338-7611.}
\and
Pablo P\'erez-Lantero\thanks{Departamento de Matem\'atica y Ciencia de la Computaci\'on, Universidad de Santiago de Chile (USACH), Las Sophoras 173, Santiago de Chile, Región Metropolitana, Chile, {\tt pablo.perez.l@usach.cl}, https://orcid.org/0000-0002-8703-8970.
}
\and
Carlos Seara\thanks{Departament de Matem\`atiques, Universitat Polit\`ecnica de Catalunya, Jordi Girona 1, 08034 Barcelona, Spain, {\tt carlos.seara@upc.edu}, https://orcid.org/0000-0002-0095-1725.
}
}

\begin{document}
\maketitle

\begin{abstract}
Let $P$ be a set of $n$ weighted points in the plane. We study optimization with axis-aligned rectangles under objectives determined by the incidence pattern of each point with the selected rectangles. As a central case, we consider two possibly overlapping rectangles maximizing the total weight of the points in their symmetric difference. We give a direct algorithm based on a generalized Maximum Consecutive Subsequence tree running in $O(n^4\log n)$ time and $O(n)$ space, and improve the running time to $O(n^4)$ via a reduction to Weighted Depth in $\mathbb{R}^8$. The resulting parameter-space instance has the signed coefficient pattern $(w,w,-2w)$ induced by symmetric difference. When the two factor boxes are allowed to be arbitrary and the associated weights may be signed, this coefficient structure is linearly equivalent, up to an additive constant, to general Weighted Depth. More generally, for any fixed $k$ and fixed Boolean incidence objective $h:\{0,1\}^k\to\mathbb{R}$, a Möbius expansion over the Boolean lattice yields an $O(n^{2k})$-time algorithm through Weighted Depth in $\mathbb{R}^{4k}$. This strictly extends depth-based objectives by allowing the value of a point to depend on which rectangles cover it. Symmetric difference, union, intersection, exact-depth and threshold coverage, and overlap rewards or penalties are all special cases.
\end{abstract}

\emph{Keywords}: Computational Geometry, Weighted Geometric Optimization, Axis-aligned Rectangles, Weighted Depth, MCS-Tree.


\section{Introduction}

Covering weighted point sets with simple geometric regions is a classical problem in computational geometry. Given a finite set $P$ of $n$ points in the plane and a weight function $\omega:P\to\mathbb{R}$, the goal is to choose one or more geometric regions so as to maximize the total weight of the points selected by them. Variants of this problem have been studied for several geometric objects~\cite{baral2017maximum,CABELLO2013203,CABELLO2008195,chazellegg}, with applications and connections to data analysis, clustering, discrepancy theory, and machine learning~\cite{DOBKIN1996453,Erick}.

A basic example is the \textsc{Maximum Weighted Box} problem, where the region is an axis-aligned rectangle. Cortés et al.~\cite{CortesDPSUV09} gave an $O(n^2\log n)$-time algorithm in the plane by reducing the problem to one-dimensional subproblems handled with a \emph{Maximum Consecutive Subsequence tree} (MCS-tree). Barbay et al.~\cite{BARBAY2014437} later obtained an $O(n^2)$-time worst-case bound, together with adaptive algorithms that are faster on suitable instances. Related weighted-selection problems include maximum-weight islands~\cite{BAUTISTASANTIAGO2011246,doi:10.1137/19M1303010} and bichromatic separation problems with strips, boxes, and guillotine partitions~\cite{BEREG201365,DIAZBANEZ20171,DIAZBANEZ201780,FERNANDEZGOYCOOLEA2024100503}.

We study the natural two-box extension: \emph{Given two possibly overlapping axis-aligned rectangles $A$ and $B$, we want to maximize the total weight of the points in their symmetric difference $A\triangle B$}. Thus, a point contributes when it is covered by exactly one of the two rectangles, while a point covered by both contributes nothing. Problems involving two boxes have been considered under other objectives, such as minimum-area enclosure and point-set covering~\cite{arkin2006algorithms,becker1996enclosing,bespamyatnikh2000covering}; the symmetric-difference objective leads to a different interaction between the two rectangles.

A particularly relevant predecessor is the work of Dobkin et al.~\cite{DOBKIN1996453}. In the bichromatic discrepancy setting they considered the union of two disjoint axis-aligned rectangles and obtained an $O(n^2\log n)$-time algorithm; they also developed dynamic machinery for unions of a fixed number of intervals. Their two-rectangle result relies on disjointness. Our central problem instead permits arbitrary overlap, and the general framework below allows the contribution of a point to depend on its complete incidence pattern with the selected rectangles.

After establishing the direct symmetric-difference bound, we show that the fixed zero-one sector coefficients used by the algorithm can be replaced by arbitrary real values determined by the complete incidence pattern of each point with the labeled boxes. This leads to a Boolean incidence framework that can distinguish which boxes contain a point, rather than only how many boxes contain it. Coverage-depth objectives are recovered as the symmetric special cases of this framework.

We use two complementary approaches. The first works directly in the plane. After fixing the horizontal boundaries of the boxes, a sector decomposition reduces the remaining optimization to a one-dimensional problem maintained by a generalized MCS-tree. For two boxes this gives $O(n^4\log n)$ time and $O(n)$ space. The second algorithm moves to parameter space: two planar rectangles are represented by their eight boundary coordinates, and each input point gives rise to a constant number of boxes in $\mathbb{R}^8$. The use of rectangle-boundary coordinates as parameter-space variables follows the standard connection between \textsc{Maximum-Weight Box} and \textsc{Weighted Depth} used by Barbay et al.~\cite{BARBAY2014437}; the new ingredient here is the encoding of the joint incidence pattern of two labeled rectangles, including the signed interaction induced by symmetric difference. The resulting instance is a \textsc{Weighted Depth} problem, and Chan's fixed-dimensional framework~\cite{Chan2013Klee}, together with the real-weighted formulations used for maximum-weight boxes~\cite{BackursDikkalaTzamos2016,BARBAY2014437}, yields an $O(n^4)$-time algorithm. More generally, $k$ rectangles are represented in dimension $4k$. A M\"obius expansion writes every Boolean incidence objective as a linear combination of simultaneous membership events in subfamilies of the selected boxes, producing an $O(n^{2k})$-time algorithm for every fixed $k$.

Our main contributions are the following.
\begin{enumerate}[label=(\roman*)]
    \item We generalize the MCS-tree to maintain the best three consecutive $x$-blocks while one rectangle boundary is swept. This yields a direct $O(n^4\log n)$-time and $O(n)$-space algorithm for \textsc{Maximum-Weight Two Boxes Symmetric Difference}.
    \item We show that the same direct framework handles every fixed Boolean incidence objective by changing only the sector coefficients. In particular, the algorithm can distinguish two sectors having the same coverage depth but covered by different labeled boxes.
    \item Using the standard boundary-coordinate parameterization, we encode the two-box problem as \textsc{Weighted Depth} on $O(n)$ boxes in $\mathbb{R}^8$, improving the worst-case running time to $O(n^4)$. The same reduction handles every fixed two-box Boolean incidence objective. The essential interaction for symmetric difference is the signed coefficient pattern $(w,w,-2w)$. With arbitrary factor boxes, this coefficient structure is linearly equivalent, up to an additive constant, to general Weighted Depth. Hence the coefficient pattern alone cannot yield a truly sub-$n^4$ algorithm without also improving the corresponding general Weighted Depth bound in dimension eight.
    \item For fixed $k$, the generalized MCS-tree uses $O(k^2n)$ space and has $O(k^3)$ merge cost. A second reduction, based on M\"obius inversion on the Boolean lattice, produces $(2^k-1)n=O(n)$ boxes in $\mathbb{R}^{4k}$ and gives an $O(n^{2k})$-time algorithm for every fixed Boolean incidence objective.
\end{enumerate}

\textbf{Paper outline.} Section~\ref{sec:MWTB} develops the direct two-box algorithm and the generalized MCS-tree. Section~\ref{sec:RESDISC} gives the main bounds: after proving the direct symmetric-difference result, Section~\ref{sec:boolean_objectives} generalizes the fixed sector coefficients to Boolean incidence objectives, and Section~\ref{sec:weighted_depth} gives the parameter-space reduction. Section~\ref{sec:k_boxes} extends both approaches to a fixed number $k$ of boxes and analyzes the computational complexity. Applications and variants are discussed in Section~\ref{sec:applications}, followed by conclusions in Section~\ref{sec:Conclu}.

\subsection{The MCS-tree}

We briefly recall the MCS-tree of Cortés et al.~\cite{CortesDPSUV09}. It is a balanced binary tree that maintains a maximum-weight consecutive subsequence of a weighted sequence under point updates.

Let $X=(x_1,\ldots,x_n)$ be a sequence. A \emph{block} of $X$ is a consecutive subsequence; we also allow the empty block. Two blocks are \emph{contiguous} if their concatenation is again a block of $X$. For a block $X_{i,j}=(x_i,x_{i+1},\ldots,x_j)$ and a function $f:\{x_1,\ldots,x_n\}\to\mathbb{R}$, we write
\[
    f(X_{i,j})=\sum_{k=i}^j f(x_k).
\]

The leaves of the MCS-tree correspond to the elements of $X$, and each internal node $U$ represents a canonical block. If $U_1$ and $U_2$ are its children, then $U=U_1U_2$. For every node $U$ we store four values,
\[
    \mathcal{A}(U)=\{S(U),L(U),M(U),R(U)\},
\]
where $M(U)$ is the maximum weight of a block of $U$; $L(U)$ and $R(U)$ are the maximum weights of a block touching the left and right endpoint of $U$, respectively, with the empty block allowed; and $S(U)$ is the total weight of $U$. For any block $V$ of $X$, write
\[
    \omega(V)=\sum_{x\in V}\omega(x),
\]
with $\omega(\emptyset)=0$. More explicitly, if $U=(x_l,\ldots,x_r)$,
\begin{enumerate}
    \item $M(U)=\max_V\omega(V)$, where $V$ is a block of $U$;
    \item $L(U)=\max_V\omega(V)$, where $V$ is empty or contains $x_l$;
    \item $R(U)=\max_V\omega(V)$, where $V$ is empty or contains $x_r$;
    \item $S(U)=\omega(U)$.
\end{enumerate}
The optimum for the whole sequence is therefore $M(X)$ at the root.

The four values at an internal node are obtained in constant time from its children:
\begin{equation}\label{eq:A(U)recursions}
    \begin{split}
        S(U) &= S(U_1) + S(U_2),                          \\
        L(U) &= \max\{L(U_1),~ S(U_1) + L(U_2)\},         \\
        M(U) &= \max\{M(U_1),~ R(U_1) + L(U_2),~ M(U_2)\},\\
        R(U) &= \max\{R(U_1) + S(U_2),~ R(U_2)\}.
    \end{split}
\end{equation}
For a leaf $U=(x)$, we set $S(U)=\omega(x)$ and
$L(U)=M(U)=R(U)=\max\{0,\omega(x)\}$. All four values are $0$ for an empty block. Figure~\ref{fig:example_MCStree} illustrates the construction on a sequence of length $16$.

    \begin{figure}[h]
        \centering
        \includegraphics[width=\linewidth,keepaspectratio]{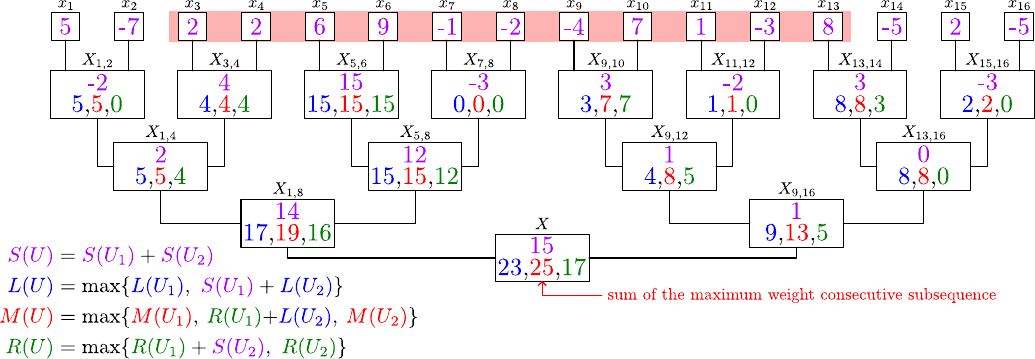}
        \caption{Maximum-weight consecutive subsequence computed with the MCS-tree.}
        \label{fig:example_MCStree}
    \end{figure}

The tree is built in $O(n)$ time. If the weight of one element changes, only the nodes on its root-to-leaf path must be recomputed, giving an $O(\log n)$ update time. If the maximizing block itself is needed, its endpoint indices can be stored together with the four values; see~\cite{CortesDPSUV09}.

\section{Maximum-Weight Two Boxes Symmetric Difference}\label{sec:MWTB}

Let $P=\{p_1,\ldots,p_n\}$ be a set of weighted points in the plane, with weight function $\omega:P\to\mathbb{R}$. Throughout the paper, rectangles and higher-dimensional boxes are closed. Since the objective depends only on the incidences with the finite set $P$, every realizable incidence pattern has a representative whose boundary avoids the input points: a side passing through points of $P$ can be displaced by a sufficiently small amount into the adjacent coordinate gap without crossing any point that would change the pattern.

For the sweep description we first assume that no two points share an
$x$- or $y$-coordinate. This assumption is only notational. Points
having the same coordinate can be processed simultaneously at the
corresponding sweep event. They may also be ordered symbolically,
provided that no rectangle boundary is allowed to separate points
having the same coordinate. In the MCS-tree, points sharing an
$x$-coordinate may equivalently be grouped into one leaf, whose
$f_j$-value is the sum of their contributions. Thus, ties do not
change the feasible incidence patterns or the objective value.

We seek two axis-aligned rectangles $A$ and $B$ that maximize the total weight of the points in $A\triangle B$. The rectangles may overlap and need not be distinct; in particular, $A=B$ is allowed. We also allow rectangles disjoint from $P$. Our direct algorithm is based on a generalization of the MCS-tree introduced in the previous section.

Up to rotations and reflections, Figure~\ref{fig:cases} shows the five relative placements used by the direct two-box algorithm: containment intersection, corner intersection, cross intersection, side intersection, and disjointness. We label the rectangles $A$ and $B$ only to describe the configurations.

    \begin{figure}[h]
    	\centering
    	\subfloat[]{
    		\includegraphics[scale=0.7,page=1]{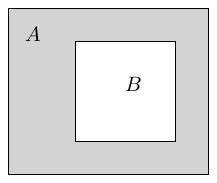}
    		\label{fig:contain}
    	}~~
    	\subfloat[]{
    		\includegraphics[scale=0.7,page=2]{img.pdf}
    		\label{fig:corner}
    	}~~
    	\subfloat[]{
    		\includegraphics[scale=0.7,page=3]{img.pdf}
    		\label{fig:cross}
    	}~~
    	\subfloat[]{
    		\includegraphics[scale=0.7,page=4]{img.pdf}
    		\label{fig:side}
    	}~~
    	\subfloat[]{
    		\includegraphics[scale=0.7,page=5]{img.pdf}
    		\label{fig:disjoint}
    	}
    
            \bigskip
    	\caption{
    			(a) Containment intersection.
    			(b) Corner intersection.
    			(c) Cross intersection.
    			(d) Side intersection. 
    			(e) Disjoint.
    	}
    	\label{fig:cases}
    \end{figure} 
Because the objective is symmetric in $A$ and $B$, the labels in Figure~\ref{fig:cases} are interchangeable; panel~(d) fixes one orientation only for illustration. We treat the configurations separately, but the same optimization framework applies to all of them. We describe that framework next.

\subsection{Optimization framework}
    
Fix four horizontal lines $y=\ell_i$, $i=0,\ldots,3$, with $\ell_0\geq\ell_1\geq\ell_2\geq\ell_3$, and assume that no input point lies on one of them. For $i=1,2,3$, let $\varphi_i:P\to\mathbb{R}$ indicate the strip between $\ell_{i-1}$ and $\ell_i$:
    \begin{equation}\label{eq:horizontal_strips}
        \varphi_i(p)=\left\{\begin{array}{ll}
        1 & \text{if }\ell_{i}<p_y<\ell_{i-1},\\
        0 & \text{otherwise.}
        \end{array}\right.
    \end{equation}
    
For each vertical block $j=1,2,3$, define
    \begin{equation}\label{eq:vertcal_strips}
        f_j(p)=\sum_{i=1}^3 a_{i,j}~\varphi_i(p)~\omega(p),
    \end{equation}
where $a_{i,j}$ is the entry in row $i$ and column $j$ of a $3\times3$ zero-one \emph{sector activation matrix}. Now sort the points by increasing $x$-coordinate and write the resulting sequence as $X=(x_1,\ldots,x_n)$. For fixed horizontal lines, the remaining task is one-dimensional: choose three consecutive labeled blocks $X_1,X_2,X_3$ maximizing $\sum_{j=1}^3 f_j(X_j)$. Figure~\ref{fig:framework} illustrates this reduction for the corner and side-intersection configurations of Figure~\ref{fig:cases}.

    \begin{figure}[h]
        \centering
        \includegraphics{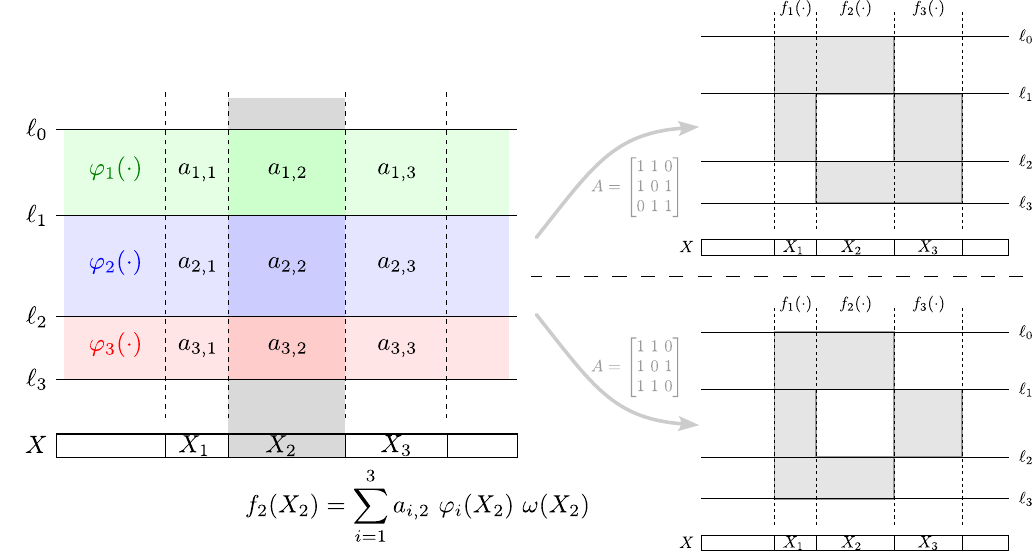}
        \bigskip
        \caption{Example of the optimization framework.}
    	\label{fig:framework}
    \end{figure}

The generalized MCS-tree described below solves this one-dimensional problem in $O(n)$ time once the lines are fixed. After the tree has been built, changing the contribution of one point requires only an $O(\log n)$ update.

To optimize over all placements, we fix three horizontal boundaries and sweep the fourth. There are $O(n^3)$ relevant choices for the fixed boundaries and $O(n)$ events in each sweep. Building the tree costs $O(n)$, while each event costs $O(\log n)$. This gives $O(n^4\log n)$ time for the two-box problem.

\subsection{Generalized MCS-tree}

The generalized MCS-tree uses the same balanced-tree structure as the ordinary MCS-tree. Each node $U$ represents a canonical block of $X=(x_1,\ldots,x_n)$ and stores a collection $\mathcal{A}(U)$ of partial optima. The labeled blocks $V_j$ may be empty and contribute zero. An ordered family $V_s,\ldots,V_e$ is called a \emph{consecutive labeled family} if, after deleting the empty blocks, the remaining blocks are pairwise disjoint, appear from left to right in increasing label order, and their concatenation is a block of $U$. Thus empty labels preserve the order but do not create gaps between two nonempty blocks.
    \begin{enumerate}
    \item $\displaystyle M(U)=\max_{V_1,V_2,V_3}\sum_{j=1}^3f_j(V_j)$ where $V_1,V_2,V_3$ form a consecutive labeled family in $U$. 
    \end{enumerate}
Thus, the optimum for fixed horizontal lines is $M(X)$ at the root. To combine solutions across two child nodes, we also store
$\mathcal{A}(U)=\{S_{s,e}(U),L_s(U),M(U),R_e(U):1\leq s\leq e\leq3\}$,
for a total of $13$ values. They record the same type of optimum under endpoint and label-range restrictions. If $U=(x_l,\ldots,x_r)$, the definitions are:
    \begin{enumerate}[resume]
        \item $L_s(U)$ forces the inclusion of the leftmost element or is empty:
        
        $\displaystyle L_s(U)=\max_{V_s,\ldots,V_3}\sum_{j=s}^3f_j(V_j),$ where $V_s,\ldots,V_3$ form a consecutive labeled family in $U$ for $1\leq s\leq3$, with $x_l$ contained in one of them or all of them being empty.
        \item $R_e(U)$ forces the inclusion of the rightmost element or is empty: 
        
        $\displaystyle  R_e(U)=\max_{V_1,\ldots,V_e}\sum_{j=1}^ef_j(V_j),$ where $V_1,\ldots,V_e$ form a consecutive labeled family in $U$ for $1\leq e\leq3$, with $x_r$ contained in one of them or all of them being empty.
        \item $S_{s,e}(U)$ forces the inclusion of both the leftmost and the rightmost elements: 
        
        $\displaystyle S_{s,e}(U)=\max_{V_s,\ldots,V_e}\sum_{j=s}^ef_j(V_j),$ where $V_s,\ldots,V_e$ form a consecutive labeled family in $U$ for $1\leq s\leq e\leq3$, with $x_l$ and $x_r$ contained in some of them.
    \end{enumerate}

    \begin{figure}[ht]
        \centering
        \includegraphics[scale=1,page=1]{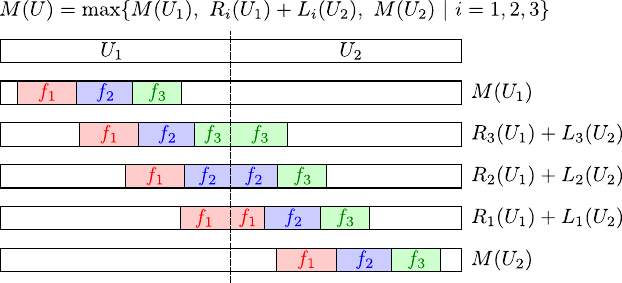}

        \medskip\hrule
        
        \includegraphics[scale=1,page=2]{box_recursion.pdf}
        \caption{Recursive computation cases of $M(U)$ (top) and $R_2(U)$ (bottom).}
        \label{fig:M(U)recursion}
    \end{figure}

For two boxes, all values at an internal node $U=U_1U_2$ can be computed in constant time from the values stored at its children. At a leaf $U=(x)$, we initialize the states directly from $f_1,f_2,f_3$:
    \[
    \begin{aligned}
       S_{s,e}(U) &= \max_{s\leq j\leq e} f_j(x),\\
       L_s(U) &= \max\left\{0,\max_{s\leq j\leq 3} f_j(x)\right\},\\
       R_e(U) &= \max\left\{0,\max_{1\leq j\leq e} f_j(x)\right\},\\
       M(U) &= \max\{0,f_1(x),f_2(x),f_3(x)\}.
    \end{aligned}
    \]
For an empty block, all states are zero. The same initialization remains valid when the sector coefficients are arbitrary real numbers. For an internal node, we use the following recurrences:
        \begin{equation}\label{eq:generalized_A(U)_recursions}
        \begin{split}
            S_{s,e}(U) &= \max\{S_{s,i}(U_1)+S_{i,e}(U_2)~|~ \text{with } i=s,\ldots,e\},      \\
            L_s(U) &= \max\{L_s(U_1),~ S_{s,i}(U_1) + L_i(U_2)~|~ \text{with } i=s,\ldots,3\}, \\
            M(U) &= \max\{M(U_1),~ R_i(U_1) + L_i(U_2),~ M(U_2)~|~ \text{with } i=1,2,3\},\\
            R_e(U) &= \max\{R_i(U_1) + S_{i,e}(U_2),~ R_e(U_2)~|~ \text{with } i=1,\ldots,e\}.
        \end{split}
    \end{equation}
Two of these computations are depicted in Figure~\ref{fig:M(U)recursion}.

\begin{lemma}\label{lem:generalized_mcs_correct}
For arbitrary real-valued functions $f_1,f_2,f_3$ on $X$, the leaf initialization above together with the recurrences in Equation~\eqref{eq:generalized_A(U)_recursions} computes exactly the quantities $S_{s,e}(U)$, $L_s(U)$, $R_e(U)$, and $M(U)$ defined for every canonical block $U$.
\end{lemma}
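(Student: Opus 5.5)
We argue by structural induction on the MCS-tree, from the leaves upward. For every canonical block $U$ and every state in $\mathcal{A}(U)$, we show that the recurrence value equals the defined optimum. We prove two inequalities for each state. The first is \emph{soundness}: every candidate combined in the recurrence is the value of some feasible consecutive labeled family of the required type in $U$, so the recurrence value is at most the optimum. The second is \emph{completeness}: every feasible family of the required type in $U$ has value at most some candidate term. It is convenient to describe a consecutive labeled family by its \emph{label sequence}: a map assigning to each element of its support, which is a contiguous block, a label in $\{s,\ldots,e\}$, nondecreasing from left to right. Nonempty labeled blocks are exactly the maximal runs of equal labels. Unused labels correspond to empty blocks, which contribute $0$ because $f_j(\emptyset)=0$. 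The value of the family is then $\sum_{x} f_{\mathrm{label}(x)}(x)$. This reformulation makes both inequalities transparent.

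For the \textbf{base case}, take $U=(x)$. A family in $U$ is either all empty, or it assigns $x$ a single label $j$ in the allowed range. The states $S_{s,e}$ require $x$ to be covered, so they equal $\max_{s\le j\le e}f_j(x)$. The states $L_s$ and $R_e$ also allow the all-empty family, which gives the extra $0$. The state $M$ allows any label in $\{1,2,3\}$ or the empty family. These are exactly the leaf formulas, and arbitrary real values of $f_j$ cause no trouble.

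For the \textbf{inductive step}, take $U=U_1U_2$ and split a nonempty feasible family by where its support meets the two halves. For $S_{s,e}(U)$, the support is all of $U$ with labels in $[s,e]$. Let $i$ be the label of the last element of $U_1$. By monotonicity, the restriction to $U_1$ is an $S_{s,i}(U_1)$-family and the restriction to $U_2$ is an $S_{i,e}(U_2)$-family, and the value splits additively. Here a label block straddling the boundary is simply split into two pieces carrying the same label $i$; this is the key point and is valid because $f_i$ is additive. Conversely, concatenating an $S_{s,i}(U_1)$-family with an $S_{i,e}(U_2)$-family yields a nondecreasing label sequence on all of $U$. The two $i$-runs merge into one block, and the value is preserved.

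The remaining states are handled the same way:
\begin{itemize}
\item For $L_s(U)$, either the support lies within $U_1$, giving the term $L_s(U_1)$, or it crosses into $U_2$. In the second case it covers all of $U_1$, and splitting at the boundary label $i\ge s$ gives $S_{s,i}(U_1)+L_i(U_2)$.
\item $R_e(U)$ is the mirror image: the support either lies within $U_2$, or it covers $U_2$ and splits at label $i$, giving $R_i(U_1)+S_{i,e}(U_2)$.
\item For $M(U)$, the support lies within $U_1$, lies within $U_2$, or crosses the boundary. In the last case it ends at the right end of $U_1$ and starts at the left end of $U_2$, which gives $R_i(U_1)+L_i(U_2)$ where $i$ is the label at the boundary.
\end{itemize}
In each case, the terms that permit empty parts (the $0$ options inside $L$ and $R$) correspond to degenerate splits whose values are already dominated by another candidate. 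For instance, $R_i(U_1)=0$ in $M$ reproduces a family inside $U_2$. So including these terms preserves soundness.

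I expect the main obstacle to be bookkeeping rather than any single idea. Two points need care. First, the shared label $i$ must be allowed on both sides, which is why the ranges read $s\le i\le e$ rather than using strict inequalities. Second, soundness must hold for the degenerate terms. In a term such as $S_{s,i}(U_1)+L_i(U_2)$ with $L_i(U_2)$ attained by the empty family, the family is really an $L_s(U)$-family contained in $U_1$ and covering its left endpoint. This is still feasible for $L_s(U)$ because its support contains $x_l$. Likewise, a concatenation involving an empty side must still contain the required endpoints. I will check each of the four recurrences against this list. Once that is done, induction yields equality at every canonical block, and in particular $M(X)$ at the root is the one-dimensional optimum.
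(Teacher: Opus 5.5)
Your proposal is correct and follows the paper's proof: induction on tree height, with each optimal family either contained in one child or split at the child boundary at a shared label $i$ that may be empty on one side. Your explicit soundness direction and label-sequence reformulation spell out what the paper leaves implicit. The soundness direction checks that every candidate, including the degenerate ones with an empty part, is realized by a feasible family.
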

\begin{proof}
We argue by induction on the height of the tree. The statement is immediate for leaves. Let $U=U_1U_2$ be an internal node and assume that all states are computed correctly for its two children.

Consider first $S_{s,e}(U)$. Every feasible solution contains both endpoints of $U$, so its consecutive labeled family crosses the boundary between $U_1$ and $U_2$. For some $i\in\{s,\ldots,e\}$, the restriction to $U_1$ uses labels from $s$ through $i$, while the restriction to $U_2$ uses labels from $i$ through $e$; the shared label may be empty on one side. The best contributions are therefore $S_{s,i}(U_1)$ and $S_{i,e}(U_2)$. Maximizing over $i$ gives the first recurrence.

For $L_s(U)$, the optimum either stays inside $U_1$, giving $L_s(U_1)$, or extends into $U_2$. In the latter case, for some $i\in\{s,\ldots,3\}$ the restriction to $U_1$ uses labels from $s$ through $i$ and the remaining suffix in $U_2$ uses labels from $i$ through $3$. These two parts contribute $S_{s,i}(U_1)$ and $L_i(U_2)$. This is exactly the second recurrence. The argument for $R_e(U)$ is symmetric.

Finally, an optimum for $M(U)$ is contained in one child or crosses the boundary. The first two cases contribute $M(U_1)$ and $M(U_2)$. In the crossing case, for some $i\in\{1,2,3\}$ the left restriction ends within the label range $1,\ldots,i$ and the right restriction begins within $i,\ldots,3$, giving $R_i(U_1)+L_i(U_2)$. Taking the maximum over these possibilities proves the recurrence for $M(U)$ and completes the induction.
\end{proof}

It follows that the two-box tree can be built in $O(n)$ time. A change in the values $f_j(x)$ for one point affects only one root-to-leaf path and is propagated in $O(\log n)$ time.
    
\section{Results}\label{sec:RESDISC}

For reference, the central symmetric-difference problem uses the following $18$ sector-activation matrices for the $18$ unlabeled configurations; blank entries are zero:

$$
\begin{bsmallmatrix}1&1&1\\1&~&1\\1&1&1\end{bsmallmatrix}, ~~
\begin{bsmallmatrix}1&1&~\\1&~&1\\1&1&~\end{bsmallmatrix}, ~~
\begin{bsmallmatrix}1&~&~\\1&~&1\\1&~&~\end{bsmallmatrix}, ~~
\begin{bsmallmatrix}~&1&1\\1&~&1\\~&1&1\end{bsmallmatrix}, ~~
\begin{bsmallmatrix}~&1&~\\1&~&1\\~&1&~\end{bsmallmatrix}, ~~
\begin{bsmallmatrix}~&~&1\\1&~&1\\~&~&1\end{bsmallmatrix}, 
$$$$
\begin{bsmallmatrix}1&1&1\\1&~&1\\~&1&~\end{bsmallmatrix}, ~~
\begin{bsmallmatrix}1&1&~\\1&~&1\\~&1&1\end{bsmallmatrix}, ~~
\begin{bsmallmatrix}1&~&~\\1&~&1\\~&~&1\end{bsmallmatrix}, ~~
\begin{bsmallmatrix}~&1&1\\1&~&1\\1&1&~\end{bsmallmatrix}, ~~
\begin{bsmallmatrix}~&1&~\\1&~&1\\1&1&1\end{bsmallmatrix}, ~~
\begin{bsmallmatrix}~&~&1\\1&~&1\\1&~&~\end{bsmallmatrix},
$$$$
\begin{bsmallmatrix}1&1&1\\~&~&~\\~&1&~\end{bsmallmatrix}, ~~
\begin{bsmallmatrix}1&1&~\\~&~&~\\~&1&1\end{bsmallmatrix}, ~~
\begin{bsmallmatrix}1&~&~\\~&~&~\\~&~&1\end{bsmallmatrix}, ~~
\begin{bsmallmatrix}~&1&1\\~&~&~\\1&1&~\end{bsmallmatrix}, ~~
\begin{bsmallmatrix}~&1&~\\~&~&~\\1&1&1\end{bsmallmatrix}, ~~
\begin{bsmallmatrix}~&~&1\\~&~&~\\1&~&~\end{bsmallmatrix}.
$$

These matrices come from the relative placements in Figure~\ref{fig:cases} together with the boundary orders that are distinct for the sweep; rotations and reflections are handled by symmetry.

For completeness, we record the corresponding configuration count, although only the fact that it is constant for fixed $k$ is used in the running-time analysis. Assume first that all box-boundary coordinates are distinct. Ignoring box labels, the $2k$ vertical side positions can be paired into the $k$ $x$-intervals in $(2k-1)!!$ ways, and the horizontal side positions give the same factor. Pairing the resulting horizontal and vertical intervals into rectangles contributes a factor $k!$. Thus the number of unlabeled configurations is $k!((2k-1)!!)^2$, and assigning labels to the rectangles contributes one additional factor $k!$. Here $1!!=1$ and $(2r+1)!!=(2r+1)(2r-1)!!$ for $r\geq1$. Coincident boundary coordinates do not create new incidence patterns, because coincident sides can be separated inside adjacent coordinate gaps without crossing an input point. In particular, for two boxes there are $18$ unlabeled configurations and $36$ labeled configurations. The $18$ matrices above suffice for symmetric difference because that objective is invariant under exchanging the two box labels.

\begin{theorem}\label{thm:two_box_direct}
    The \textsc{Maximum-Weight Two Boxes Symmetric Difference} problem can be solved in $O(n^4\log n)$ time and $O(n)$ space.
\end{theorem}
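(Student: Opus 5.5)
The plan is to reduce every pair of rectangles to one of finitely many sector patterns, solve each pattern with the generalized MCS-tree for fixed horizontal lines, and enumerate the horizontal lines by a sweep. The proof therefore has two parts: correctness, meaning the algorithm returns exactly the optimum of $\omega(A\triangle B\cap P)$; and complexity, meaning the time and space bounds.

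\textbf{Correctness.} Fix an optimal pair $(A,B)$.

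\begin{enumerate}
\item By the perturbation remark at the start of Section~\ref{sec:MWTB}, I may assume no side of $A$ or $B$ passes through a point of $P$. I may also assume that coincident boundary coordinates have been separated inside coordinate gaps; this does not change the incidence pattern.
\item The four horizontal sides of $A$ and $B$, sorted, give lines $\ell_0\ge\ell_1\ge\ell_2\ge\ell_3$. The four vertical sides, sorted, cut the plane into at most five vertical slabs.
\item Points above $\ell_0$ or below $\ell_3$ lie in neither rectangle and contribute nothing. The same holds for points in the leftmost and rightmost slabs.
\item The remaining region is a $3\times3$ grid of sectors. Row $i$ is the strip between $\ell_{i-1}$ and $\ell_i$; column $j$ is the $j$-th inner slab.
\item Each sector lies entirely inside or entirely outside each of $A$ and $B$. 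So $A\triangle B$, restricted to the grid, is a union of sectors. Its indicator matrix is one of the $18$ listed activation matrices, after applying a rotation or reflection and using the symmetry of $\triangle$ in the labels.
\item Hence $\omega(A\triangle B\cap P)=\sum_j f_j(X_j)$, where $X_j$ are the points whose $x$-coordinates fall in slab $j$. These $X_1,X_2,X_3$ are three consecutive blocks of $X$, so the value is at most $M(X)$ for that matrix and those lines.
\item Conversely, any consecutive labeled family $V_1,V_2,V_3$ arises from some choice of vertical sides, since empty blocks correspond to degenerate slabs. Together with the lines and a matrix, these vertical sides realize a genuine pair of rectangles whose symmetric difference has exactly that weight.
\end{enumerate}

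The converse step is the one needing care. I must check that each listed matrix really is the indicator of $A\triangle B$ for every placement of the vertical sides consistent with that configuration. I must also check that degenerate choices, with empty blocks or coincident lines, still correspond to actual rectangles. These degenerate pairs may be empty or have $A=B$, both of which the problem allows. This guarantees $M(X)$ never overestimates the optimum. Lemma~\ref{lem:generalized_mcs_correct} then guarantees the tree computes $M(X)$ exactly, even though some $f_j$ vanish identically.

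\textbf{Running time.}
\begin{enumerate}
\item For each of the $18$ matrices, and for the constant number of rotations and reflections, enumerate $\ell_0,\ell_1,\ell_2$ among the $O(n)$ combinatorially distinct gaps between consecutive $y$-coordinates. This gives $O(n^3)$ triples.
\item For each triple, build the generalized MCS-tree in $O(n)$ time, with $\ell_3$ initially just below $\ell_2$.
\item Sweep $\ell_3$ downward. Each point it crosses moves from outside the grid into row $3$, which changes $f_j$ at that single leaf. Update that leaf and recompute its root-to-leaf path in $O(\log n)$ time, then read $M(X)$ at the root.
\item This costs $O(n\log n)$ per triple, and $O(n^4\log n)$ overall, since the matrix count is constant. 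Ties in coordinates are handled by grouping points, as described in Section~\ref{sec:MWTB}.
\end{enumerate}

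\textbf{Space.} Only one tree, with $13$ values per node, is alive at any time, giving $O(n)$ space. We keep just the best value seen, together with its realizing boundaries if the rectangles themselves are required.
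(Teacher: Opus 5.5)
Your proposal is correct and follows the paper's proof: fix one of the $18$ activation matrices and three horizontal boundaries ($O(n^3)$ choices), then sweep the fourth boundary with $O(\log n)$ leaf updates in the generalized MCS-tree, keeping only one $O(n)$-size tree alive. Your two-sided correctness argument, in particular that every consecutive labeled family (including ones with empty, degenerate slabs) is realized by an actual pair of rectangles, spells out a point the paper leaves implicit; the extra rotations and reflections are harmless but unnecessary, since the $18$ matrices already list all $3\cdot 3\cdot 2$ unlabeled configurations.
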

\begin{proof}
Fix one of the $18$ sector configurations and three of the four horizontal boundaries. Each boundary has only $O(n)$ relevant positions, namely the gaps between consecutive $y$-coordinates, so there are $O(n^3)$ fixed triples. For each triple we build the generalized MCS-tree in $O(n)$ time and sweep the remaining boundary only through the $O(n)$ positions that preserve the fixed strict boundary order. A sweep event changes the data of one point and costs $O(\log n)$ to propagate through the tree. Thus, each fixed triple costs $O(n\log n)$ after initialization, and the total time is $O(n^4\log n)$. The factor $18$ is constant. The tree has $O(n)$ nodes and stores only a constant number of values per node, so it uses $O(n)$ space.
\end{proof}

\subsection{From sector coefficients to Boolean incidence objectives}\label{sec:boolean_objectives}

The proof of Theorem~\ref{thm:two_box_direct} uses the fixed coefficients $a_{i,j}\in\{0,1\}$ determined by the sector-activation matrices of the symmetric-difference objective. However, neither the generalized MCS-tree nor its recurrences require these coefficients to be binary. Once a labeled combinatorial placement of the boxes is fixed, each sector has a fixed incidence pattern, and the coefficient assigned to that sector may be any fixed real value. This observation leads naturally to a more general class of objectives.

Write $[k]=\{1,\ldots,k\}$. Let $\mathcal{F}=(A_1,\ldots,A_k)$ be a labeled family of $k$ axis-aligned boxes. For $p\in P$, define
\begin{equation}\label{eq:incidence_vector}
    \mathbf{z}_{\mathcal{F}}(p)
    =\bigl(\mathds{1}[p\in A_1],\ldots,\mathds{1}[p\in A_k]\bigr)
    \in\{0,1\}^k.
\end{equation}
For a fixed function $h:\{0,1\}^k\to\mathbb{R}$, define the corresponding Boolean incidence objective by
\begin{equation}\label{eq:boolean_objective}
    \Phi_h(\mathcal{F})
    =\sum_{p\in P}\omega(p)\,
      h\!\left(\mathbf{z}_{\mathcal{F}}(p)\right).
\end{equation}
We may assume $h(\mathbf{0})=0$. Indeed, if $\bar h(\mathbf{z})=h(\mathbf{z})-h(\mathbf{0})$, then
\[
    \Phi_h(\mathcal{F})
    =\Phi_{\bar h}(\mathcal{F})
     +h(\mathbf{0})\sum_{p\in P}\omega(p),
\]
and the second term is independent of the selected boxes.

For instance, with three boxes one may set
\[
\begin{array}{c|cccccccc}
\mathbf{z} & 000&100&010&001&110&101&011&111\\ \hline
h(\mathbf{z})&0&4&2&1&3&5&0&6 .
\end{array}
\]
The entries of $h$ are rewards assigned to complete incidence patterns; they are not obtained by adding independent rewards associated with the individual boxes. Thus, points covered only by $A_1$, only by $A_2$, and only by $A_3$ have the same coverage depth but receive values $4$, $2$, and $1$, respectively. No objective depending only on the depth can express this distinction.

An incidence objective is \emph{symmetric} if
\[
 h(z_1,\ldots,z_k)=h(z_{\pi(1)},\ldots,z_{\pi(k)})
\]
for every permutation $\pi$ of $\{1,\ldots,k\}$. These are exactly the objectives that depend only on the Hamming weight $\|\mathbf{z}\|_1=\sum_i z_i$, i.e., on the number of covering boxes. Consequently, the usual coverage-depth objectives form a proper subclass of the present model.

Returning to the two-box direct algorithm, fix a labeled combinatorial placement and let $\boldsymbol{\chi}_{i,j}\in\{0,1\}^2$ denote the incidence vector of the sector in row $i$ and column $j$. The fixed zero-one activation coefficient used for symmetric difference is now replaced by
$ a_{i,j}=h(\boldsymbol{\chi}_{i,j}). $
Accordingly, the vertical-strip functions remain
\[
    f_j(p)=\sum_{i=1}^{3}a_{i,j}\varphi_i(p)\omega(p).
\]
The geometry of the decomposition and all generalized MCS-tree recurrences remain unchanged; only the coefficients assigned to the sectors are replaced according to their incidence patterns.

For example, in the containment configuration $B\subseteq A$, the nine internal sectors have incidence vectors corresponding to $A$ alone except for the central sector, which is covered by both boxes. Three familiar choices are
\[
\begin{aligned}
 h_{\triangle}(z_1,z_2)&=\mathds{1}[z_1\neq z_2],\\
 h_{\cup}(z_1,z_2)&=\mathds{1}[z_1+z_2\geq1],\\
 h_{\cap}(z_1,z_2)&=z_1z_2.
\end{aligned}
\]
They give, respectively,
\[
 \begin{bmatrix}1&1&1\\1&0&1\\1&1&1\end{bmatrix},
 \qquad
 \begin{bmatrix}1&1&1\\1&1&1\\1&1&1\end{bmatrix},
 \qquad
 \begin{bmatrix}0&0&0\\0&1&0\\0&0&0\end{bmatrix}.
\]
Thus, the same geometric decomposition supports both symmetric and label-sensitive objectives; only the sector coefficients change.

Since a general two-box Boolean incidence objective may distinguish the labels $A$ and $B$, we enumerate the $36$ labeled configurations rather than the $18$ unlabeled configurations sufficient for symmetric difference. This affects only the constant factor.

\begin{theorem}\label{thm:two_box_boolean}
Let $h:\{0,1\}^2\rightarrow\mathbb{R}$ be fixed. Two labeled axis-aligned boxes $A$ and $B$ maximizing
\[
    \Phi_h(A,B)
    =\sum_{p\in P}\omega(p)\,
      h\bigl(\mathds{1}[p\in A],\mathds{1}[p\in B]\bigr)
\]
can be found in $O(n^4\log n)$ time and $O(n)$ space.
\end{theorem}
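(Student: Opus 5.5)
The plan is to rerun the proof of Theorem~\ref{thm:two_box_direct} unchanged, except for two modifications: we enumerate the $36$ labeled configurations, and we replace each zero-one activation coefficient by $a_{i,j}=h(\boldsymbol{\chi}_{i,j})$. As noted above, we may assume $h(\mathbf{0})=0$ after subtracting the constant $h(\mathbf{0})\sum_{p}\omega(p)$. With this normalization, points outside $A\cup B$ contribute nothing, and the only points that matter are those in the sectors of the $3\times3$ grid determined by the placement.

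\textbf{Correctness.} Fix a labeled configuration, meaning a relative order of the four horizontal and four vertical sides. For each input point $p$ lying in the sector $(i,j)$, the pair $(\mathds{1}[p\in A],\mathds{1}[p\in B])$ equals $\boldsymbol{\chi}_{i,j}$. This vector depends only on the configuration and not on the actual coordinates.

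Choose the four horizontal lines $\ell_0\ge\dots\ge\ell_3$ and the vertical blocks $X_1,X_2,X_3$ consistent with the configuration. Then
\[
\Phi_h(A,B)=\sum_{j=1}^3 f_j(X_j),\qquad f_j(p)=\sum_{i=1}^3 h(\boldsymbol{\chi}_{i,j})\varphi_i(p)\omega(p).
\]
Conversely, every choice of three consecutive labeled blocks and lines respecting the configuration realizes two labeled boxes. Here I must check carefully the degenerate cases, where blocks are empty or some lines coincide. In those cases the value obtained could correspond to a different actual configuration. It must nevertheless equal $\Phi_h$ of some actual pair $(A',B')$, so the maximum over all configurations is neither inflated nor missed. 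This is the step I expect to be the main obstacle.

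I would handle it as follows. An empty block or a collapsed strip removes the corresponding sectors, and each such degenerate choice equals $\Phi_h$ of a genuine configuration obtained by shrinking a side into a point-free gap. If necessary, I would also accept a value from a configuration whose sector labels are inconsistent, provided it is dominated. The cleaner route is to observe that every block choice defines actual vertical coordinates: the block boundaries lie in gaps. The boxes $A,B$ are then well defined with exactly those sector incidences, so the value is attained.

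\textbf{Algorithm and complexity.} By Lemma~\ref{lem:generalized_mcs_correct}, which holds for arbitrary real $f_j$, the generalized MCS-tree computes the maximum over consecutive labeled families correctly for each fixed set of lines. The running time follows exactly as in Theorem~\ref{thm:two_box_direct}. We fix three horizontal boundaries in $O(n^3)$ ways and build the tree in $O(n)$ time. We then sweep the fourth boundary through $O(n)$ events, each an $O(\log n)$ leaf update. Repeating this for the $36$ labeled configurations (a constant factor) gives $O(n^4\log n)$ time. The space is $O(n)$ because the tree stores $13$ values per node. Finally, we return the best value found, adding back $h(\mathbf{0})\sum_p\omega(p)$, together with stored block endpoints to recover $A$ and $B$.
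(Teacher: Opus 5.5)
Your proposal is correct and follows the paper's proof. Like the paper, you normalize $h(\mathbf{0})=0$, set $a_{i,j}=h(\boldsymbol{\chi}_{i,j})$ in each of the $36$ labeled configurations, and rerun the sweep of Theorem~\ref{thm:two_box_direct} using Lemma~\ref{lem:generalized_mcs_correct}, which already holds for arbitrary real $f_j$. The paper leaves implicit that every possibly degenerate choice of blocks and lines is realized by genuine boxes with their sides placed inside coordinate gaps; your ``cleaner route'' proves this correctly, so the hedge about accepting dominated values from inconsistent configurations is unnecessary.
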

\begin{proof}
Normalize $h$ so that $h(0,0)=0$. In each labeled combinatorial placement, every sector has a fixed incidence vector $\boldsymbol{\chi}_{i,j}\in\{0,1\}^2$. Set $a_{i,j}=h(\boldsymbol{\chi}_{i,j})$ and define $f_j$ as in Equation~\eqref{eq:vertcal_strips}. For fixed horizontal lines, maximizing $\sum_j f_j(X_j)$ is exactly Equation~\eqref{eq:boolean_objective}. Since the generalized MCS-tree recurrences allow arbitrary real values of $f_j$, the sweep analysis of Theorem~\ref{thm:two_box_direct} applies unchanged. The use of $36$ rather than $18$ configurations affects only the constant factor.
\end{proof}

\subsection{A faster algorithm via Weighted Depth}\label{sec:weighted_depth}

The direct algorithm is explicit and uses only linear space, but its running time contains a logarithmic factor. We remove this factor through a parameter-space reduction. The use of rectangle-boundary coordinates as parameter-space variables follows the standard connection between \textsc{Maximum-Weight Box} and \textsc{Weighted Depth}; see Barbay et al.~\cite{BARBAY2014437}. Here two rectangles are represented simultaneously, and the contribution of each input point depends on their joint incidence pattern. For symmetric difference, this joint contribution produces the signed weights $\omega(p),\omega(p),-2\omega(p)$ used below.

An instance of \textsc{Weighted Depth} in dimension $d$ consists of weighted axis-aligned boxes in $\mathbb{R}^d$; weights may be positive or negative. The task is to find a point maximizing the total weight of the boxes that contain it.

For every fixed $d\ge 3$, \textsc{Weighted Depth} on $m$
axis-aligned boxes with arbitrary real weights can be solved in $O(m^{d/2})$ time and linear working space~\cite{BARBAY2014437,Chan2013Klee}. We use this result only for $d=8$ and $d=4k$. The bound follows from Chan's fixed-dimensional Klee-measure framework~\cite{Chan2013Klee} and the weighted formulations used for maximum-weight box problems~\cite{BackursDikkalaTzamos2016,BARBAY2014437}. Negative weights are essential in our reductions.

Let
\[
  x_{\min}=\min_{p\in P}p_x,\quad x_{\max}=\max_{p\in P}p_x,
  \qquad
  y_{\min}=\min_{p\in P}p_y,\quad y_{\max}=\max_{p\in P}p_y.
\]
Choose constants $x_l<x_{\min}\leq x_{\max}<x_r$ and $y_d<y_{\min}\leq y_{\max}<y_u$, and let $I_x=[x_l,x_r]$ and $I_y=[y_d,y_u]$. Since $P\subset I_x\times I_y$, replacing any rectangle $R$ by $R\cap(I_x\times I_y)$ does not change which points of $P$ it contains.
Hence all boundary coordinates may be restricted to these intervals. The non-strict inequalities also cover the one-point case.

Represent two boxes by the eight parameters
\[
 \textbf{q}=(a_x^-,a_x^+,a_y^-,a_y^+,b_x^-,b_x^+,b_y^-,b_y^+)
 \in I_x^2\times I_y^2\times I_x^2\times I_y^2 \subset \mathbb{R}^8.
\]
When the lower endpoints do not exceed the upper endpoints, these coordinates define rectangles $A_{\textbf{q}}$ and $B_{\textbf{q}}$. A reversed interval, for example $a_x^->a_x^+$, satisfies no membership constraint and therefore has empty incidence on $P$; the same incidence is realized by a legitimate rectangle avoiding $P$. Conversely, every legitimate pair can be replaced by a pair whose boundary coordinates lie in $I_x\times I_y$ without changing which input points it contains. Hence the bounded parameter space represents exactly the incidence pairs relevant to the optimization problem.

For a point $p=(p_x,p_y)\in P$, define the following boxes in the parameter space:
\begin{align*}
 Q_p^A={}&[x_l,p_x]\times[p_x,x_r]\times[y_d,p_y]\times[p_y,y_u]
          \times I_x^2\times I_y^2,\\
 Q_p^B={}&I_x^2\times I_y^2\times
          [x_l,p_x]\times[p_x,x_r]\times[y_d,p_y]\times[p_y,y_u],\\
 Q_p^{AB}={}&Q_p^A\cap Q_p^B.
\end{align*}
These are axis-aligned boxes in $\mathbb{R}^8$, and their meaning is immediate from the construction:
\[
 \textbf{q}\in Q_p^A \Longleftrightarrow p\in A_{\textbf{q}},
 \qquad
 \textbf{q}\in Q_p^B \Longleftrightarrow p\in B_{\textbf{q}},
 \qquad
 \textbf{q}\in Q_p^{AB} \Longleftrightarrow p\in A_{\textbf{q}}\cap B_{\textbf{q}}.
\]
Thus each input point contributes exactly three boxes to the Weighted Depth instance.

\begin{theorem}\label{thm:two_box_weighted_depth}
The \textsc{Maximum-Weight Two Boxes Symmetric Difference} problem can be solved in $O(n^4)$ time and $O(n)$ space.
\end{theorem}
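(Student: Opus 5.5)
The plan is to reduce the problem to a single \textsc{Weighted Depth} instance in $\mathbb{R}^8$ built from the boxes $Q_p^A,Q_p^B,Q_p^{AB}$ defined above, and then invoke the $O(m^{d/2})$-time, linear-space bound for $d=8$ and $m=3n$.

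\emph{Assigning weights.} For each $p\in P$, give $Q_p^A$ and $Q_p^B$ weight $\omega(p)$ and $Q_p^{AB}$ weight $-2\omega(p)$. This is the instance for the pointwise identity
\[
\mathds{1}[p\in A\triangle B]=\mathds{1}[p\in A]+\mathds{1}[p\in B]-2\,\mathds{1}[p\in A\cap B],
\]
which one verifies by checking the four incidence patterns $(0,0),(1,0),(0,1),(1,1)$.

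\emph{Equality of objectives.} For every $\mathbf{q}$ in the bounded parameter space $I_x^2\times I_y^2\times I_x^2\times I_y^2$, the membership equivalences stated before the theorem give that the depth of $\mathbf{q}$ equals $\sum_{p}\omega(p)\,\mathds{1}[p\in A_{\mathbf{q}}\triangle B_{\mathbf{q}}]$. Reversed intervals need a brief check: if, say, $a_x^->a_x^+$, then no $p$ satisfies $a_x^-\le p_x\le a_x^+$, so $\mathbf{q}\notin Q_p^A$ for every $p$. This matches the empty incidence of a legitimate rectangle avoiding $P$.

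\emph{Equality of optima.} We then show that the maximum depth equals the optimum of the original problem. Every pair of rectangles can be clipped to $I_x\times I_y$ without changing which points of $P$ it contains, so it is represented by some $\mathbf{q}$. Conversely, every $\mathbf{q}$ realizes the incidence pair of some legitimate pair of rectangles, as argued before the theorem. A maximizing $\mathbf{q}$ is therefore converted to rectangles $A,B$ directly from its coordinates, replacing any reversed interval by a rectangle disjoint from $P$.

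\emph{Output and complexity.} Two technical points remain. The Weighted Depth bound must apply to closed boxes whose faces may touch the boundary of the parameter space. If the cited algorithm also needs to return a witness point, we use the standard fact that it reports the maximizing cell. Building the $3n$ boxes costs $O(n)$ time, and the depth computation costs $O((3n)^{8/2})=O(n^4)$ time and $O(n)$ space.

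The main obstacle is not the algebra but the correct use of the cited Weighted Depth result: signed real weights must be allowed (the $-2\omega(p)$ term is essential), and the linear-space claim must hold in dimension $8$. I would rely on the statement quoted earlier in the paper, which covers arbitrary real weights for every fixed $d\ge3$ with linear working space.
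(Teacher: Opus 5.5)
Your reduction is the same as the paper's: the same three parameter boxes per point with weights $\omega(p),\omega(p),-2\omega(p)$, the same pointwise identity, the same correspondence between the bounded domain $I_x^2\times I_y^2\times I_x^2\times I_y^2$ and realizable incidence pairs, and the same $O(m^{d/2})$ bound with $d=8$, $m=3n$. However, you skip one step that the paper states explicitly. \textsc{Weighted Depth} maximizes over all of $\mathbb{R}^8$, but your ``equality of objectives'' and ``equality of optima'' are proved only for $\mathbf{q}$ in the bounded domain. Outside the domain, $\mathbf{q}$ lies in none of the boxes $Q_p^A,Q_p^B,Q_p^{AB}$, since each of them is contained in the domain. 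So the depth there is $0$, and in general it differs from $\omega(A_{\mathbf{q}}\triangle B_{\mathbf{q}})$. For example, take $a_x^-<x_l$ and set the other coordinates of $A$ to $x_r,y_d,y_u$. The rectangle read off from these coordinates contains all of $P$, yet the depth is $0$.

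Two small additions close this. First, the original optimum is nonnegative: take $A=B$, or equivalently a domain point with both $x$-intervals reversed, which has depth $0$. Hence the zero depth outside the domain cannot exceed the maximum inside it, and the global maximum of \textsc{Weighted Depth} equals the original optimum. Second, your output step converts a maximizing $\mathbf{q}$ ``directly from its coordinates,'' so it must handle a witness returned outside the domain. This can only happen when the optimum is $0$; in that case you should output, for example, $A=B$, not the rectangles given by the coordinates. Both fixes use facts you already have, so the gap is easily repaired. As written, though, you compare the original optimum with the maximum over the domain, not with the quantity the algorithm actually computes.
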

\begin{proof}
For each $p\in P$, give $Q_p^A$ and $Q_p^B$ weight $\omega(p)$, and give $Q_p^{AB}$ weight $-2\omega(p)$. The resulting instance has $m=3n$ boxes in $\mathbb{R}^8$.

For every parameter point $\textbf{q}$, the defining equivalences of $Q_p^A,Q_p^B,Q_p^{AB}$ give
\begin{align*}
 \operatorname{WD}(\textbf{q})
 &=\sum_{p\in P}\omega(p)\Bigl(
      \mathds{1}[p\in A_{\textbf{q}}]
     +\mathds{1}[p\in B_{\textbf{q}}]
     -2\mathds{1}[p\in A_{\textbf{q}}\cap B_{\textbf{q}}]
   \Bigr)\\
 &=\sum_{p\in P}\omega(p)\,\mathds{1}[p\in A_{\textbf{q}}\triangle B_{\textbf{q}}]\\
 &=\omega(A_{\textbf{q}}\triangle B_{\textbf{q}}),
\end{align*}
where the second equality uses
$
 \mathds{1}[p\in A\triangle B]
 =\mathds{1}[p\in A]+\mathds{1}[p\in B]
  -2\mathds{1}[p\in A\cap B].
$
Since the bounded parameter space represents exactly the realizable incidence pairs on $P$, we have
\[
    \max_{\textbf{q}}\operatorname{WD}(\textbf{q})
    =\max_{A,B}\omega(A\triangle B).
\]

All parameter boxes $Q_p^A$, $Q_p^B$, and $Q_p^{AB}$ are contained in the
bounded parameter space
$ I_x^2\times I_y^2\times I_x^2\times I_y^2, $ so the weighted depth is zero outside this domain. On the other hand, the original problem always has a feasible solution of value zero: choosing $A=B$ gives $A\triangle B=\emptyset$. Hence its optimum is nonnegative. Therefore the zero value outside the bounded domain cannot improve on the original optimum, and we may maximize Weighted Depth over all of $\mathbb{R}^8$.

Now $d=8$ and $m=3n$, so the fixed-dimensional Weighted Depth algorithm runs in
\[
 O(m^{d/2})=O((3n)^4)=O(n^4)
\]
time and uses $O(m)=O(n)$ working space.
\end{proof}

The same construction handles an arbitrary fixed two-box Boolean incidence objective. After normalizing $h(0,0)=0$, set
\[
    c_{\{1\}}=h(1,0),\qquad
    c_{\{2\}}=h(0,1),\qquad
    c_{\{1,2\}}=h(1,1)-h(1,0)-h(0,1).
\]
For $z_1,z_2\in\{0,1\}$ we then have the unique multilinear identity
\begin{equation}\label{eq:two_box_boolean_mobius}
 h(z_1,z_2)=c_{\{1\}}z_1+c_{\{2\}}z_2+c_{\{1,2\}}z_1z_2.
\end{equation}
Therefore $Q_p^A,Q_p^B,Q_p^{AB}$ receive weights $c_{\{1\}}\omega(p)$, $c_{\{2\}}\omega(p)$, and $c_{\{1,2\}}\omega(p)$, respectively. Symmetric difference is recovered from $c_{\{1\}}=c_{\{2\}}=1$ and $c_{\{1,2\}}=-2$.

\begin{corollary}\label{cor:two_box_boolean_n4}
Let $h:\{0,1\}^2\rightarrow\mathbb{R}$ be fixed. Two labeled axis-aligned boxes maximizing $\Phi_h$ can be found in $O(n^4)$ time and $O(n)$ space.
\end{corollary}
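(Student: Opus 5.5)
The plan is to reuse the parameter-space construction from the proof of Theorem~\ref{thm:two_box_weighted_depth}, changing only the weights. First we normalize $h$ so that $h(0,0)=0$. As noted after Equation~\eqref{eq:boolean_objective}, this shifts $\Phi_h$ by the constant $h(0,0)\sum_{p}\omega(p)$, which does not depend on the boxes, so the maximizers are unchanged. We then define $c_{\{1\}},c_{\{2\}},c_{\{1,2\}}$ as above. The multilinear identity~\eqref{eq:two_box_boolean_mobius} is checked directly on the four points of $\{0,1\}^2$; this is routine.

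For each $p\in P$ we take the boxes $Q_p^A,Q_p^B,Q_p^{AB}\subset\mathbb{R}^8$ with weights $c_{\{1\}}\omega(p)$, $c_{\{2\}}\omega(p)$ and $c_{\{1,2\}}\omega(p)$, giving $m=3n$ boxes. The equivalences $\textbf{q}\in Q_p^A\iff p\in A_{\textbf{q}}$, and their analogues for $Q_p^B$ and $Q_p^{AB}$, together with $\mathds{1}[p\in A\cap B]=\mathds{1}[p\in A]\,\mathds{1}[p\in B]$, give for every $\textbf{q}$ in the bounded parameter domain
\[
\operatorname{WD}(\textbf{q})=\sum_{p\in P}\omega(p)\,h\bigl(\mathds{1}[p\in A_{\textbf{q}}],\mathds{1}[p\in B_{\textbf{q}}]\bigr)=\Phi_h(A_{\textbf{q}},B_{\textbf{q}}).
\]
Since the bounded parameter space represents exactly the realizable labeled incidence pairs, including empty incidence through reversed intervals, the maximum of $\operatorname{WD}$ over that domain equals $\max_{A,B}\Phi_h(A,B)$.

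The one point that needs care is extending the maximization from the bounded domain to all of $\mathbb{R}^8$, which the Weighted Depth algorithm requires. Outside the domain the depth is $0$. After normalization, choosing both boxes disjoint from $P$ gives $\Phi_h=0$, so the optimum is nonnegative and points outside the domain cannot exceed it. Unlike the symmetric-difference case, the witness is not $A=B$ but two empty-incidence boxes, which are still feasible. The fixed-dimensional Weighted Depth algorithm with $d=8$ and $m=3n$ then runs in $O(m^{4})=O(n^4)$ time and $O(n)$ space.

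To recover an actual pair of boxes, we read them off the optimal parameter point, swapping any reversed interval for a rectangle avoiding $P$. We then undo the normalization by adding back the constant $h(0,0)\sum_p\omega(p)$ to the value.
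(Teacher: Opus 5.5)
Your proposal is correct and follows the paper's proof essentially step for step. You normalize $h(0,0)=0$, reweight $Q_p^A,Q_p^B,Q_p^{AB}$ by the coefficients of Equation~\eqref{eq:two_box_boolean_mobius}, and use two $P$-avoiding boxes as the zero-value witness, so that the zero depth outside the bounded domain cannot exceed the optimum; your added remarks on reading off the boxes and undoing the normalization are harmless.
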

\begin{proof}
Normalize $h(0,0)=0$ and use Equation~\eqref{eq:two_box_boolean_mobius}. The weighted depth at a parameter point $\textbf{q}$ is exactly $\Phi_h(A_{\textbf{q}},B_{\textbf{q}})$. 
Since both boxes may be chosen disjoint from $P$ and $h(0,0)=0$, the original objective has value zero for some feasible solution and hence its optimum is nonnegative. Therefore the zero weighted depth outside the bounded parameter domain cannot improve the optimum. The dimension is eight and the instance has $3n$ parameter boxes, giving $O(n^4)$ time and $O(n)$ working space.
\end{proof}

\paragraph{Relation with general Weighted Depth.}
The coefficient pattern $(w,w,-2w)$ alone should not be expected to yield a better exponent than general Weighted Depth. Let $\mathcal D\subset\mathbb{R}^4$ denote the bounded parameter domain of one planar rectangle, and consider an arbitrary weighted box $B_i\times B_j\subseteq\mathcal D\times\mathcal D$ of weight $\lambda$. For a parameter point $(x,y)$, write
\[
    a=\mathds{1}[x\in B_i],\qquad b=\mathds{1}[y\in B_j],
\]
and define
\[
    g(a,b)=a+b-2ab.
\]
A term $w\,g(a,b)$ is exactly the contribution of the three weighted boxes
\[
    B_i\times\mathcal D,\qquad \mathcal D\times B_j,\qquad B_i\times B_j
\]
with weights $w,w,-2w$, respectively. Moreover,
\[
    g(a,1)=1-a,\qquad g(1,b)=1-b,
\]
so
\[
    \lambda ab
    =\lambda-\frac{\lambda}{2}
      \bigl(g(a,b)+g(a,1)+g(1,b)\bigr).
\]
Thus one arbitrary Weighted Depth box can be represented, up to an additive constant, by three structured terms having the same coefficient pattern, while every such structured term is already a sum of three weighted boxes. Consequently, when arbitrary factor boxes are allowed, this signed coefficient structure is linearly equivalent to general Weighted Depth up to an additive constant.

The instances arising from our geometric problem are more structured: for each input point, the constraints on the two rectangle factors are induced by that same point. Hence the equivalence above does not rule out a faster algorithm that exploits this additional point-induced structure.

We next extend both algorithms from two boxes to any fixed number $k$.

\subsection{Maximum-Weight $k$ Boxes}\label{sec:k_boxes}
For fixed $k$, brute force considers $O(n^{4k})$ combinatorially distinct placements. Evaluating every placement on all $n$ input points takes $O(n^{4k+1})$ time and $O(1)$ auxiliary space. The direct framework improves this bound by using the $2k$ horizontal boundaries to form $2k-1$ relevant strips and $2k-1$ vertical-block functions. We fix $2k-1$ horizontal boundaries and sweep the last one, which gives $O(n^{2k}\log n)$ time for each fixed labeled combinatorial configuration.

Let $q=2k-1$. The generalized MCS-tree now chooses $q$ consecutive labeled blocks of the $x$-ordered sequence $X$. At each node we store one $M$ value, $q$ values of type $L$, $q$ of type $R$, and
\[
    \frac{q(q+1)}{2}=\frac{(2k-1)(2k)}{2}
\]
values of type $S$. There are therefore $O(k^2)$ values per node and $O(k^2n)$ values in the tree. Each state takes a maximum over at most $O(k)$ split indices, so a merge costs $O(k^3)$. Building the tree takes $O(k^3n)$ time and updating one point takes $O(k^3\log n)$ time. When $k$ is fixed, these become $O(n)$ and $O(\log n)$, respectively.

The merge rules are the same as in Equation~\eqref{eq:generalized_A(U)_recursions}, with $3$ replaced by $q$. For $U=U_1U_2$,
\begin{equation}\label{eq:k_generalized_mcs}
\begin{split}
S_{s,e}(U) &= \max_{s\le i\le e}\{S_{s,i}(U_1)+S_{i,e}(U_2)\},\\
L_s(U) &= \max\Bigl\{L_s(U_1),~\max_{s\le i\le q}\{S_{s,i}(U_1)+L_i(U_2)\}\Bigr\},\\
M(U) &= \max\Bigl\{M(U_1),M(U_2),~\max_{1\le i\le q}\{R_i(U_1)+L_i(U_2)\}\Bigr\},\\
R_e(U) &= \max\Bigl\{R_e(U_2),~\max_{1\le i\le e}\{R_i(U_1)+S_{i,e}(U_2)\}\Bigr\}.
\end{split}
\end{equation}
The state meanings and leaf initialization are unchanged, with indices in $\{1,\ldots,q\}$. The proof of Lemma~\ref{lem:generalized_mcs_correct} carries over verbatim: an optimal consecutive labeled family is contained in one child or crosses the child boundary at a single label split. Thus, the merge cost is $O(k^3)$, which is constant with respect to $n$ when $k$ is fixed.

An arbitrary fixed Boolean incidence function requires no new data structure. In a fixed labeled configuration, let $\boldsymbol{\chi}_{i,j}\in\{0,1\}^k$ be the incidence vector of a sector and set
$ a_{i,j}=h(\boldsymbol{\chi}_{i,j}).
$ The $2k-1$ functions $f_j$ are defined as before.

For completeness, the same direct construction also gives a fixed-$k$ bound for arbitrary Boolean incidence objectives. Fix a labeled relative order of all horizontal and vertical sides. The incidence vector of every sector is then fixed, so after setting $a_{i,j}=h(\boldsymbol{\chi}_{i,j})$ we obtain the one-dimensional problem on $2k-1$ labeled blocks. There are $O(n^{2k-1})$ choices for the fixed horizontal boundaries, and for each choice the remaining boundary is swept through $O(n)$ relevant positions. Each event costs $O(k^3\log n)$ and the initial tree costs $O(k^3n)$. Because the number of labeled combinatorial configurations depends only on the fixed parameter $k$, the resulting running time is $O(n^{2k}\log n)$ and the space usage is $O(n)$. If $h$ is symmetric under permutations of the box labels, unlabeled configurations suffice, reducing only the hidden constant.

For fixed $k$, the direct construction above gives $O(n^{2k}\log n)$ time. To remove the logarithmic factor, we again move to parameter space. The algebraic ingredient is classical: every function on the Boolean cube has a unique multilinear representation, equivalently obtained by M\"obius inversion on the subset lattice. Applying this representation to the incidence vector of a point converts an arbitrary fixed Boolean incidence objective into a weighted sum of simultaneous membership events.

\begin{lemma}\label{lem:boolean_mobius}
Let $h:\{0,1\}^k\to\mathbb{R}$ and assume $h(\mathbf{0})=0$. There are unique coefficients $c_S$, one for every nonempty $S\subseteq[k]$, such that
\begin{equation}\label{eq:boolean_multilinear_expansion}
 h(z_1,\ldots,z_k)=\sum_{\emptyset\neq S\subseteq[k]} c_S\prod_{i\in S}z_i
 \qquad\text{for all }(z_1,\ldots,z_k)\in\{0,1\}^k.
\end{equation}
They are given by
\begin{equation}\label{eq:boolean_mobius_coefficients}
 c_S=\sum_{T\subseteq S}(-1)^{|S|-|T|}\,h(\mathbf{1}_T),
\end{equation}
where $\mathbf{1}_T$ is the characteristic vector of $T$.
\end{lemma}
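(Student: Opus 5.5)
We prove existence, the coefficient formula, and uniqueness in a single argument by reducing everything to values on characteristic vectors. Every $\mathbf{z}\in\{0,1\}^k$ equals $\mathbf{1}_U$ for the unique set $U=\{i:z_i=1\}$. On such a vector the monomial $\prod_{i\in S}z_i$ equals $\mathds{1}[S\subseteq U]$. Hence Equation~\eqref{eq:boolean_multilinear_expansion} holds for all $\mathbf{z}$ if and only if
\[
 h(\mathbf{1}_U)=\sum_{\emptyset\neq S\subseteq U} c_S \qquad\text{for every } U\subseteq[k].
\]
For $U=\emptyset$ this is the assumption $h(\mathbf{0})=0$. It is convenient to set $c_\emptyset=0$, so that the sums may range over all $S\subseteq U$.

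Uniqueness follows by strong induction on $|U|$. The identity for $U$ rewrites as $c_U=h(\mathbf{1}_U)-\sum_{S\subsetneq U}c_S$. Thus each $c_U$ is forced once all $c_S$ with $S\subsetneq U$ are known, and the system is triangular with respect to inclusion. Equivalently, the zeta matrix of the Boolean lattice is unitriangular and therefore invertible.

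For existence and the explicit formula, we define $c_S$ by Equation~\eqref{eq:boolean_mobius_coefficients} and verify the system directly. Exchanging the order of summation gives
\[
 \sum_{S\subseteq U}c_S=\sum_{T\subseteq U}h(\mathbf{1}_T)\sum_{T\subseteq S\subseteq U}(-1)^{|S|-|T|}.
\]
The inner sum equals $(1-1)^{|U|-|T|}$ by the binomial theorem, so it is $1$ if $T=U$ and $0$ otherwise. The double sum therefore collapses to $h(\mathbf{1}_U)$.

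It remains to check that the formula yields $c_\emptyset=0$, which is consistent with the convention: Equation~\eqref{eq:boolean_mobius_coefficients} at $S=\emptyset$ gives $h(\mathbf{0})=0$. Consequently no constant term appears, and the sum in Equation~\eqref{eq:boolean_multilinear_expansion} may legitimately be restricted to nonempty $S$.

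The only step requiring care is the alternating-sum identity $\sum_{T\subseteq S\subseteq U}(-1)^{|S|-|T|}=[T=U]$. It follows by substituting $S=T\cup R$ with $R\subseteq U\setminus T$ and applying the binomial theorem. The remainder is bookkeeping.
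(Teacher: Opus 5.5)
Your proof is correct and follows the same route as the paper: you reduce the multilinear identity to the subset-sum system $h(\mathbf{1}_U)=\sum_{S\subseteq U}c_S$ and invert it on the Boolean lattice, with $c_\emptyset=h(\mathbf{0})=0$. The only difference is that you carry out the M\"obius inversion explicitly, using triangularity for uniqueness and the identity $\sum_{T\subseteq S\subseteq U}(-1)^{|S|-|T|}=[T=U]$ for the formula, whereas the paper simply cites it.
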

\begin{proof}
This is M\"obius inversion on the subset lattice. For a vector $\mathbf{z}=\mathbf{1}_U$, the monomial $\prod_{i\in S}z_i$ equals $1$ exactly when $S\subseteq U$. Hence Equation~\eqref{eq:boolean_multilinear_expansion} is equivalent to
$ h(\mathbf{1}_U)=\sum_{S\subseteq U}c_S. $
Applying M\"obius inversion to this relation gives Equation~\eqref{eq:boolean_mobius_coefficients}. Since $h(\mathbf{0})=0$, the coefficient of the empty set is zero.
\end{proof}

For the three-box example introduced in Section~\ref{sec:boolean_objectives},
M\"obius inversion gives
\[
\begin{aligned}
 h(z_1,z_2,z_3)
 ={}&4z_1+2z_2+z_3-3z_1z_2-3z_2z_3+5z_1z_2z_3.
\end{aligned}
\]
For instance, at $(1,1,1)$ the value is
$4+2+1-3-3+5=6$, whereas at $(1,1,0)$ it is $4+2-3=3$, exactly as prescribed by the incidence table. This illustrates how the multilinear coefficients correct the multiple counting of simultaneous memberships.

Let $\mathcal{F}=(A_1,\ldots,A_k)$. For every $p\in P$,
\[
 \prod_{i\in S}z_i(p)=\mathds{1}\!\left[p\in\bigcap_{i\in S}A_i\right].
\]
Substituting this identity into Equation~\eqref{eq:boolean_multilinear_expansion} yields
\begin{equation}\label{eq:boolean_intersection_expansion}
 \Phi_h(\mathcal{F})=\sum_{\emptyset\neq S\subseteq[k]} c_S\,\omega\!\left(\bigcap_{i\in S}A_i\right),
\end{equation}
where $\omega(Q)=\sum_{p\in P\cap Q}\omega(p)$.

If $h$ is symmetric, then $c_S$ depends only on $|S|$. Writing this common coefficient as $c_s$, we obtain
\[
 h(\mathbf{z})=\sum_{s=1}^k c_s\sum_{\substack{S\subseteq[k]\\|S|=s}}\prod_{i\in S}z_i
 =\sum_{s=1}^k c_s\binom{\|\mathbf{z}\|_1}{s}.
\]
Thus, the familiar finite Newton/binomial expansion is precisely the permutation-invariant special case of the Boolean-lattice M\"obius expansion.

\begin{theorem}\label{thm:k_box_weighted_depth}
Let $k$ be fixed and let $h:\{0,1\}^k\rightarrow\mathbb{R}$ be fixed. The problem of selecting $k$ labeled axis-aligned boxes maximizing $\Phi_h$ can be solved in $O(n^{2k})$ time and $O(n)$ space, where the constants depend on $k$.
\end{theorem}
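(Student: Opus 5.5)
The plan is to generalize the proof of Theorem~\ref{thm:two_box_weighted_depth} and Corollary~\ref{cor:two_box_boolean_n4} to $k$ boxes, using Lemma~\ref{lem:boolean_mobius} to turn the objective into intersection terms. First normalize $h$ so that $h(\mathbf{0})=0$; this changes $\Phi_h$ only by the additive constant $h(\mathbf{0})\sum_{p}\omega(p)$, which does not depend on the boxes. Then compute the coefficients $c_S$ for all nonempty $S\subseteq[k]$ by Equation~\eqref{eq:boolean_mobius_coefficients}. This costs $O(3^k)$ arithmetic operations, a constant for fixed $k$. By Equation~\eqref{eq:boolean_intersection_expansion}, $\Phi_h(\mathcal F)=\sum_{S\neq\emptyset}c_S\,\omega(\bigcap_{i\in S}A_i)$.

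Next, represent the labeled family $(A_1,\ldots,A_k)$ by a parameter vector $\mathbf q\in(I_x^2\times I_y^2)^k\subset\mathbb{R}^{4k}$. As in the two-box case, a reversed interval encodes an empty incidence pattern, and every realizable pattern is represented after clipping to $I_x\times I_y$. For $p\in P$ and $i\in[k]$, let $Q_p^{i}$ be the box that constrains the $i$-th four-coordinate block to $[x_l,p_x]\times[p_x,x_r]\times[y_d,p_y]\times[p_y,y_u]$ and leaves all other blocks free in $I_x^2\times I_y^2$. Then $\mathbf q\in Q_p^i$ if and only if $p\in A_{i,\mathbf q}$. For nonempty $S$, set $Q_p^S=\bigcap_{i\in S}Q_p^i$. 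This is again an axis-aligned box, and $\mathbf q\in Q_p^S$ if and only if $p\in\bigcap_{i\in S}A_{i,\mathbf q}$. Give $Q_p^S$ weight $c_S\,\omega(p)$. Summing over $p$ and $S$, the weighted depth at $\mathbf q$ equals $\sum_p\omega(p)\sum_S c_S\prod_{i\in S}z_i(p)=\Phi_h(A_{1,\mathbf q},\ldots,A_{k,\mathbf q})$ by Equation~\eqref{eq:boolean_multilinear_expansion}. The instance therefore has $m=(2^k-1)n=O(n)$ boxes in dimension $d=4k$.

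It remains to check that maximizing over all of $\mathbb{R}^{4k}$ is harmless. Every box lies inside the bounded domain, so the depth outside it is $0$. Choosing all $k$ boxes disjoint from $P$ gives $\Phi_h=0$ because $h(\mathbf{0})=0$, so the optimum is nonnegative and the value $0$ outside the domain cannot exceed it. Conversely, every $\mathbf q$ in the domain corresponds to a realizable incidence pattern. Hence $\max_{\mathbf q}\operatorname{WD}(\mathbf q)$ equals the optimum of the normalized problem. An optimal $\mathbf q$ is converted back to $k$ rectangles, replacing reversed intervals by rectangles that avoid $P$.

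Finally, apply the fixed-dimensional Weighted Depth bound with $d=4k\geq 4\geq 3$. This gives $O(m^{d/2})=O(((2^k-1)n)^{2k})=O(n^{2k})$ time and $O(m)=O(n)$ working space, with constants depending on $k$. The constructions themselves take $O(2^kkn)$ time. The only delicate step is the correctness of the parameter-space representation: showing that reversed intervals and clipping produce exactly the realizable incidence vectors, so that no spurious pattern appears. I expect this to go through verbatim from the two-box argument, block by block. The rest is bookkeeping.
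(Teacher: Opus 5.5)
Your proposal is correct and follows the paper's proof essentially step for step. You normalize $h(\mathbf{0})=0$, expand via Lemma~\ref{lem:boolean_mobius}, and build one box $Q_{p,S}$ of weight $c_S\,\omega(p)$ in $\mathbb{R}^{4k}$ for each point and each nonempty $S$; you then use the nonnegativity of the optimum to ignore the region outside the bounded domain and invoke the $O(m^{d/2})$ Weighted Depth bound with $m=(2^k-1)n$ and $d=4k$. Confining the unconstrained blocks to $I_x^2\times I_y^2$, instead of leaving them unrestricted as the paper's wording suggests, is exactly what makes the ``depth is zero outside the bounded domain'' step literally true.
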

\begin{proof}
Normalize $h$ so that $h(\mathbf{0})=0$, and let $\{c_S:\emptyset\neq S\subseteq[k]\}$ be the coefficients from Lemma~\ref{lem:boolean_mobius}. Represent each planar box $A_i$ by its four boundary coordinates
$(a_{i,x}^-,a_{i,x}^+,a_{i,y}^-,a_{i,y}^+), $ so the full parameter space has dimension $4k$. As before, all coordinates may be restricted to fixed bounded intervals containing the input.

For $p\in P$ and a nonempty $S\subseteq[k]$, let $Q_{p,S}$ contain exactly those parameter vectors for which $p\in A_i$ for every $i\in S$. For each $i\in S$, this imposes
\[
 a_{i,x}^-\leq p_x\leq a_{i,x}^+,
 \qquad
 a_{i,y}^-\leq p_y\leq a_{i,y}^+,
\]
and there is no restriction for $i\notin S$. Hence $Q_{p,S}$ is an axis-aligned box in $\mathbb{R}^{4k}$.

Give $Q_{p,S}$ weight $c_S\omega(p)$. By Equation~\eqref{eq:boolean_intersection_expansion}, the weighted depth of a parameter vector $\textbf{q}$ is exactly $\Phi_h(A_1,\ldots,A_k)$. Reversed intervals can be treated as empty and replaced by point-avoiding rectangles, exactly as in the two-box case. The weighted depth is zero outside the bounded domain. Since all selected boxes may avoid $P$ and $h(\mathbf{0})=0$, there is always a feasible solution of value zero. Hence the optimum of the original problem is nonnegative, and unrestricted Weighted Depth has the same optimum. The construction creates at most
\[
 m=(2^k-1)n=O(n)
\]
parameter boxes for fixed $k$. Since the dimension is $d=4k$, Weighted Depth runs in
\[
 O(m^{d/2})=O\!\left(m^{2k}\right)=O(n^{2k}).
\]
The working space is $O(m)=O(n)$ for fixed $k$. A maximizing parameter point encodes the $4k$ coordinates of an optimal labeled family of rectangles.
\end{proof}

For the symmetric-difference objective on $k$ boxes, take
\[
 h_{\triangle}(z_1,\ldots,z_k)=\mathds{1}\!\left[\sum_{i=1}^k z_i \text{ is odd}\right].
\]
Theorem~\ref{thm:k_box_weighted_depth} therefore gives $O(n^{2k})$ time for every fixed $k$. The direct algorithm and the Weighted Depth reduction hide different constants in $k$: the direct method depends on the finite number of boundary-order configurations, while the parameter-space reduction depends on the $2^k-1$ boxes generated per input point and on the dimension $4k$. Since $k$ is fixed, neither dependence changes the exponent of $n$.




\section{Applications and Discussion}\label{sec:applications}

\subsection{Consequences and other variants}\label{sec:variants}

Several familiar objectives are immediate special cases of the Boolean incidence model. For union, use
\[
 h_{\cup}(z_1,\ldots,z_k)=\mathds{1}\!\left[\sum_{i=1}^k z_i\geq1\right].
\]

\begin{corollary}\label{cor:union}
The \textsc{Maximum-Weight Two Boxes Union} problem can be solved in $O(n^4)$ time.
\end{corollary}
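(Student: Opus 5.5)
The plan is to obtain Corollary~\ref{cor:union} as a direct instance of Corollary~\ref{cor:two_box_boolean_n4}, with $k=2$ and $h=h_{\cup}$. First we check that the problem fits the Boolean incidence model. For two labeled boxes $A,B$ and a point $p$, we have $p\in A\cup B$ exactly when $\mathds{1}[p\in A]+\mathds{1}[p\in B]\geq 1$. Hence
\[
 \omega(A\cup B)=\sum_{p\in P}\omega(p)\,h_{\cup}\bigl(\mathds{1}[p\in A],\mathds{1}[p\in B]\bigr)=\Phi_{h_{\cup}}(A,B).
\]
Moreover $h_{\cup}(0,0)=0$, so no normalization is needed.

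Next we compute the coefficients of Equation~\eqref{eq:two_box_boolean_mobius}. They are $c_{\{1\}}=h(1,0)=1$, $c_{\{2\}}=h(0,1)=1$, and $c_{\{1,2\}}=h(1,1)-h(1,0)-h(0,1)=-1$. This is inclusion--exclusion, $\mathds{1}[p\in A\cup B]=\mathds{1}[p\in A]+\mathds{1}[p\in B]-\mathds{1}[p\in A\cap B]$. Concretely, each point contributes $Q_p^A$ and $Q_p^B$ with weight $\omega(p)$ and $Q_p^{AB}$ with weight $-\omega(p)$. This gives $3n$ boxes in $\mathbb{R}^8$.

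By the proof of Corollary~\ref{cor:two_box_boolean_n4}, the weighted depth at every parameter point $\mathbf{q}$ equals $\omega(A_{\mathbf{q}}\cup B_{\mathbf{q}})$. The bounded parameter domain represents all realizable incidence pairs, and the optimum is nonnegative because two rectangles avoiding $P$ give value $0$. Therefore maximizing over $\mathbb{R}^8$ is correct, and the fixed-dimensional Weighted Depth bound $O(m^{d/2})=O((3n)^4)=O(n^4)$ gives the claim.

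There is no real obstacle here. The only point needing care is that the union objective counts a point in $A\cap B$ once, not twice. The $-1$ coefficient on $Q_p^{AB}$ handles this. No disjointness assumption, unlike Dobkin et al., is needed, since overlap is absorbed by that signed box.
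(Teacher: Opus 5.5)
Your proposal is correct and follows the paper's route: the paper gives no separate proof and treats the corollary as an immediate instance of Corollary~\ref{cor:two_box_boolean_n4} (equivalently Theorem~\ref{thm:k_box_weighted_depth} with $k=2$) applied to $h_{\cup}$. Your coefficients $c_{\{1\}}=c_{\{2\}}=1$, $c_{\{1,2\}}=-1$, the resulting $3n$ signed boxes in $\mathbb{R}^8$, and the nonnegative-optimum argument are exactly what that reduction supplies.
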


For intersection, use
\[
 h_{\cap}(z_1,\ldots,z_k)=\prod_{i=1}^k z_i.
\]
For a fixed $t$, exact-depth and threshold objectives are obtained from
\[
 h_{=t}(\mathbf{z})=\mathds{1}[\|\mathbf{z}\|_1=t],
 \qquad
 h_{\geq t}(\mathbf{z})=\mathds{1}[\|\mathbf{z}\|_1\geq t].
\]

\begin{corollary}\label{cor:depth_special_cases}
For fixed $k$ and fixed $t\in\{1,\ldots,k\}$, the following problems can all be solved in $O(n^{2k})$ time: maximum-weight union of $k$ boxes, maximum-weight intersection of $k$ boxes, maximum weight covered by exactly $t$ boxes, and maximum weight covered by at least $t$ boxes.
\end{corollary}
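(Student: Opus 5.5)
The plan is to read off all four problems as instances of Theorem~\ref{thm:k_box_weighted_depth}. For each one I will exhibit a fixed function $h:\{0,1\}^k\to\mathbb{R}$ such that $\Phi_h(A_1,\ldots,A_k)$ is exactly the stated objective, namely the total weight of points in $\bigcup_i A_i$, in $\bigcap_i A_i$, in exactly $t$ boxes, or in at least $t$ boxes. The candidates are $h_{\cup}$, $h_{\cap}$, $h_{=t}$ and $h_{\geq t}$. Each takes values in $\{0,1\}$ and is applied to the incidence vector $\mathbf{z}_{\mathcal{F}}(p)$ of Equation~\eqref{eq:incidence_vector}. So by Equation~\eqref{eq:boolean_objective}, $\Phi_h$ sums $\omega(p)$ over exactly those points whose incidence pattern satisfies the defining predicate, and this is the required objective in each case.

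Next I check the hypotheses of Theorem~\ref{thm:k_box_weighted_depth}. Since $k$ and $t$ are fixed, each $h$ is a fixed function on a finite domain, and the Möbius coefficients $c_S$ of Lemma~\ref{lem:boolean_mobius} are constants depending only on $k$ and $t$. The theorem normalizes $h(\mathbf{0})=0$, and this is automatic for $h_{\cup}$, $h_{\cap}$ (because $k\geq1$), and for $h_{=t}$ and $h_{\geq t}$ (because $t\geq1$). So no additive constant even appears. The theorem then gives $O(n^{2k})$ time with $(2^k-1)n$ parameter boxes in $\mathbb{R}^{4k}$, and the constants depend only on $k$ (and on $t\leq k$).

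For concreteness I can also record the coefficients, which confirm the reduction is well defined. The intersection objective has a single nonzero coefficient $c_{[k]}=1$. Union has $c_S=(-1)^{|S|+1}$, which is inclusion--exclusion. For the symmetric depth objectives, the binomial expansion $h(\mathbf z)=\sum_s c_s\binom{\|\mathbf z\|_1}{s}$ gives $c_s=(-1)^{s-t}\binom{s}{t}$ for exact depth $t$ and $c_s=(-1)^{s-t}\binom{s-1}{t-1}$ for $s\ge t$ for threshold $t$. These formulas are not actually needed, since Lemma~\ref{lem:boolean_mobius} guarantees the coefficients exist.

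I expect no real obstacle. The only point to watch is feasibility and the nonnegativity argument, which requires the all-empty-incidence solution to have value $0$, that is, $h(\mathbf{0})=0$. This is exactly why $t\geq1$ is assumed: for $t=0$, $h_{=0}(\mathbf 0)=1$. Even then the theorem would still apply after subtracting the constant $\omega(P)$, but that case is excluded by hypothesis anyway.
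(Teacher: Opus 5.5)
Your proposal is correct and matches the paper's argument: the paper records the corollary as an immediate consequence of Theorem~\ref{thm:k_box_weighted_depth}, applied to the fixed functions $h_{\cup}$, $h_{\cap}$, $h_{=t}$, $h_{\geq t}$, each of which satisfies $h(\mathbf{0})=0$ for $t\geq 1$. Your explicit M\"obius coefficients are also correct, namely $c_S=(-1)^{|S|+1}$ for union, $c_s=(-1)^{s-t}\binom{s}{t}$ for exact depth and $c_s=(-1)^{s-t}\binom{s-1}{t-1}$ for threshold, though as you note they are not needed, since Lemma~\ref{lem:boolean_mobius} guarantees the coefficients exist.
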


These are uniform bounds supplied by the Boolean incidence framework, not necessarily the best bounds for each special case. For instance, the intersection of $k$ axis-aligned boxes is itself an axis-aligned box, so maximum-weight intersection reduces to the classical planar \textsc{Maximum-Weight Box} problem and can be solved in $O(n^2)$ time~\cite{BARBAY2014437}.

\paragraph{Disjoint boxes.}
If $A$ and $B$ are disjoint, then
\[
    A\triangle B=A\cup B
    \qquad\text{and}\qquad
    \omega(A\triangle B)=\omega(A)+\omega(B).
\]
A simple separator argument gives an $O(n^3)$ algorithm for arbitrary real weights: two disjoint axis-aligned rectangles are separated by a vertical or horizontal line, and for each of the $O(n)$ candidate separators the best rectangle on either side is obtained from two independent planar \textsc{Maximum-Weight Box} instances. In the classical bichromatic discrepancy setting, Dobkin et al.~\cite{DOBKIN1996453} already gave the stronger $O(n^2\log n)$ bound for the union of two disjoint axis-aligned rectangles by processing all separators within a common sweep.

\paragraph{The two-box objective as a three-parameter family}
After normalizing $h(0,0)=0$, an arbitrary two-box incidence objective is described by three values
\[
    \alpha=h(1,0),\qquad
    \beta=h(0,1),\qquad
    \gamma=h(1,1).
\]
Let $W_{10}$, $W_{01}$, and $W_{11}$ denote the total weights of points with the corresponding nonzero incidence patterns. Then
\begin{equation}\label{eq:three_parameter_boolean}
    \Phi_h(A,B)=\alpha W_{10}(A,B)+\beta W_{01}(A,B)+\gamma W_{11}(A,B).
\end{equation}
Thus, the new model can distinguish the two single-coverage regions. Familiar objectives correspond to
\[
\begin{array}{c|c|c|c}
\text{Objective} & \alpha & \beta & \gamma\\ \hline
\text{Symmetric difference} & 1 & 1 & 0\\
\text{Union} & 1 & 1 & 1\\
\text{Intersection} & 0 & 0 & 1\\
\text{Asymmetric single coverage with overlap penalty} & \rho_A & \rho_B & -\lambda
\end{array}
\]
for fixed real constants $\rho_A,\rho_B,\lambda$. Symmetric two-box objectives are exactly the subfamily satisfying $\alpha=\beta$; the former two-parameter coverage-depth model is therefore recovered as a proper special case.

\paragraph{Rectilinear annuli and cross-shaped regions}
Containment gives another useful interpretation. If $B\subseteq A$, then
\[
    A\triangle B=A\setminus B
    \qquad\text{and}\qquad
    \omega(A\triangle B)=\omega(A)-\omega(B),
\]
where $\omega(\mathcal{R})=\sum_{p\in P\cap \mathcal{R}}\omega(p)$. Hence restricting the algorithm to containment solves the maximum-weight axis-aligned rectilinear-annulus problem, without requiring the inner and outer rectangles to be concentric. Its symmetric-difference activation matrix is $ \begin{bsmallmatrix}1&1&1\\1&~&1\\1&1&1\end{bsmallmatrix}. $
Likewise, $ \begin{bsmallmatrix}~&1&~\\1&1&1\\~&1&~\end{bsmallmatrix}
$ represents a cross-shaped axis-aligned polygon. Individual sector patterns can therefore describe geometric optimization problems of interest.

\section{Conclusions}\label{sec:Conclu}

We studied weighted optimization with axis-aligned rectangles under Boolean incidence objectives, beginning with Maximum-Weight Two Boxes Symmetric Difference. A generalized MCS-tree gives an explicit \(O(n^4\log n)\)-time, \(O(n)\)-space algorithm, while a reduction to Weighted Depth in \(\mathbb{R}^8\) improves the running time to \(O(n^4)\).

More generally, for fixed \(k\) and any fixed \(h:\{0,1\}^k\to\mathbb{R}\), Möbius inversion expresses the objective as a weighted combination of subfamily intersections, yielding an \(O(n^{2k})\)-time algorithm via Weighted Depth in \(\mathbb{R}^{4k}\). This framework includes symmetric difference, union, intersection, exact-depth and threshold coverage, and label-sensitive overlap rewards or penalties.

Finally, the signed coefficient structure \((w,w,-2w)\) induced by symmetric difference does not by itself permit a better exponent: with arbitrary factor boxes and signed weights, it is linearly equivalent to general Weighted Depth up to an additive constant. Whether the additional point-induced structure of our original problem can be exploited to improve the \(O(n^4)\) bound remains open.

\small 
\bibliographystyle{abbrv} 
\bibliography{refs_h}
\end{document}